\begin{document}

\title*{Robust Morphometric Analysis based on Landmarks. Applications}

\author{A.~Garc\'{\i}a-P\'erez   \and M.Y.~Cabrero-Ortega}
\institute{A.~Garc\'{\i}a-P\'erez (\Letter )  \at
Departamento de Estad\'{\i}stica, I.O. y C.N., Universidad Nacional de Educaci\'on a Distancia (UNED), Paseo Senda del Rey  9, 28040-Madrid, Spain\\
\email{agar-per@ccia.uned.es}
           \and
M.Y.~Cabrero-Ortega \at
C.A. Universidad Nacional de Educaci\'on a Distancia (UNED)-Madrid, Spain\\
\email{ycabrero@madrid.uned.es}
}

\maketitle

\abstract{Procrustes Analysis is a Morphometric method based on Configurations of Landmarks that estimates the
superimposition parameters by least-squares; for this reason, the procedure is very sensitive to outliers. In the first part of the paper we robustify this technique to classify individuals from a descriptive point of view.
In the literature there are also classical results, based on the normality of the observations, to test whether there are significant differences
between individuals. In the second part of the paper we determine a Von Mises plus Saddlepoint approximation for the tail probability of the Procrustes Statistic when the observations come from a model close to the normal. We conclude the paper with some applications using the  Geographical Information System QGIS.\\
Keywords: Robustness; Morphometrics; Von Mises expansion; Saddlepoint approximations; Geographical Information System QGIS }

\abstract*{Procrustes Analysis is a Morphometric method based on Configurations of Landmarks that estimates the
superimposition parameters by least-squares; for this reason, the procedure is very sensitive to outliers. In the first part of the paper we robustify this technique to classify individuals from a descriptive point of view.
In the literature there are also classical results, based on the normality of the observations, to test whether there are significant differences
between individuals. In the second part of the paper we determine a Von Mises plus Saddlepoint approximation for the tail probability of the Procrustes Statistic when the observations come from a model close to the normal.  We conclude the paper with some applications using the  Geographical Information System QGIS.\\
Keywords: Robustness; Morphometrics; Von Mises expansion; Saddlepoint approximations; Geographical Information System QGIS}

\section{Introduction}
\label{sec:1}

\begin{figure}[b]
\sidecaption
\includegraphics[scale=.4]{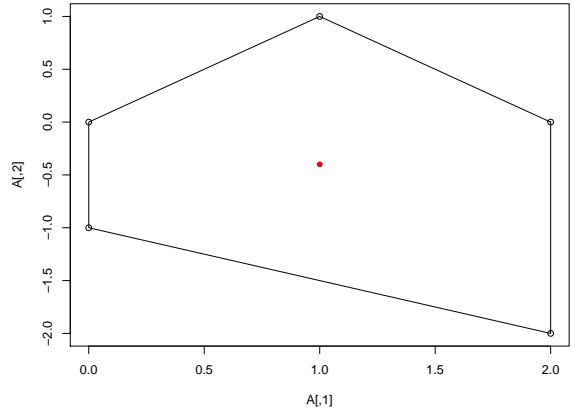}
\caption{Polygon representation of a Configuration with 5 landmarks}
\label{fig:1}
\end{figure}

This paper is about a robust classification problem of $n$ individuals based on their shapes, i.e., using their geometric information. The usual (classical or robust) methods based on a Multivariate Analysis can not extract all the geometric information from the individuals. For this reason, in recent years morphometrics methods based on Configurations of landmarks have been developed.
A landmark is a peculiar point whose position is common in all the individuals to classify. For instance,
when we classify skulls, the landmarks could be the center of the supraorbital arch, the chin, etc.; or, if we classify projectile points found in an archaeological site, the landmarks  could be the ends of the points.

In all the cases, the mathematical (geometric) information that we obtain from individuals is the $k$ coordinates of their $p$ landmarks, $ l_i =(c_{i1},...,c_{ik})$, $\; i=1,...,p$.

The matrix of landmarks coordinates is called a Configuration. For each individual with $p$ landmarks of dimension $k$ (2 or 3) we shall have a collection of landmark coordinates expressed in  $p\times k$ matrix as

$$ M =
\left(
\begin{array}{ccc}
c_{11} & \cdots & c_{1k} \\
\cdots & \cdots & \cdots \\
c_{p1} & \cdots & c_{pk} \\
\end{array}
\right)
$$

\section{Classical Morphometric Analysis from a Descriptive Point of View}
\label{sec:2}

As we have mentioned before, we shall use the shape of the individuals in their classification. Shape is a property of an object
invariant under scaling, rotation and translation; otherwise, for instance,
an object and itself with double size could be classified into two different groups.

\begin{figure}[b]
\sidecaption
\includegraphics[scale=.4]{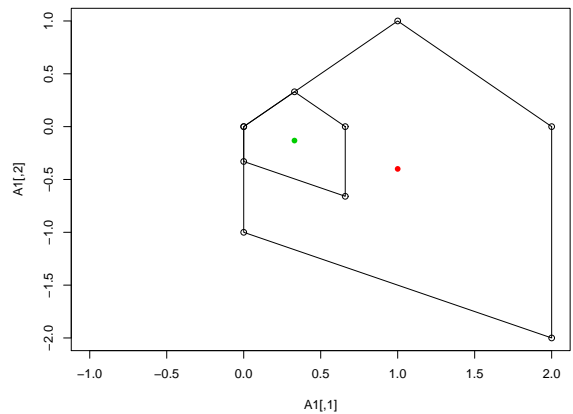}
\caption{Polygons representing a Configuration with 5 landmarks: the original one with red centroid and the scaled
to Centroid-mean Size equal to 1 with green centroid}
\label{fig:2}
\end{figure}

There are many morphometric methods; see for instance  \cite{C} or \cite{DM}. In this paper we shall consider Superimposition Methods; namely, Procrustes Analysis, obtaining the  Procrustes coordinates  with it, adapting the Configurations to a common (local) reference system and matching them  at the common center. For these reasons, a Local Coordinate Reference System is needed and a Geographical Information System will be very useful.

A common graphical representation of a Configuration is a scatter plot of its landmarks coordinates. Joining them with segments we obtaining a polygon as, for instance, in Fig.~\ref{fig:1}, where the landmarks coordinates are the vertices of the polygon.

As we have said above, to classify individuals we have first to remove the effect of  Size (scale),  Location (translation) and  Orientation (rotation) to standardize them and match them in a common center (the centroid of the polygon) in order to make them comparable.

To apply the Procrustes superimposition method  we have to estimate by least-squares the superimposition parameters
$\alpha$, $\beta$ and  ${\boldmath \Gamma}$ (scale,  translation and rotation) in order to minimize the full Procrustes  distance $d_F$ between Configurations  $M_1$ and $M_2$, i.e.,

$$\min  d_F(M_1,M_2) = \min || M_2 - \alpha M_1 {\boldmath \Gamma} - {\boldmath 1}_p \beta' || =$$

$$= \sqrt{trace[(M_2 - \alpha  M_1 \, {\boldmath \Gamma} - {\boldmath 1}_p \beta')' (M_2 - \alpha  M_1 \, {\boldmath \Gamma}- {\boldmath 1}_p \beta')]}
$$

\noindent
where $\alpha$ is a scalar representing the Size,  $\beta$ is a vector of $k$ values corresponding to a Location parameter formed by the centroid coordinates, ${\boldmath 1}_p$ is a column vector of dimension $p \times 1$ and ${\boldmath \Gamma}$ a $k \times k$ square rotation matrix.

The idea that we pursue with this transformation is to match both Configurations, i.e., a superimposition of $M_1$ onto $M_2$.

\subsection{Removing the Size Effect}

The first step we must take in Procrustes Analysis to standardize Configurations is to remove the {\it Size} effect. If, as usual, we consider as center the {\it centroid-mean} of dimension $k$ (sample mean by columns) defined by

$$M_c=\left( M_{c1} , ... , M_{ck} \right)= \left(\frac{1}{p} \sum_{i=1}^p c_{i1} , ... ,   \frac{1}{p} \sum_{i=1}^p c_{ik} \right)$$

\noindent
and easily computed with R as (\cite{RCT})

\begin{verbatim}

> apply(M,2,mean)

\end{verbatim}

\noindent
the {\it Centroid-mean Size} is defined as

\begin{figure}[b]
\sidecaption
\includegraphics[scale=.4]{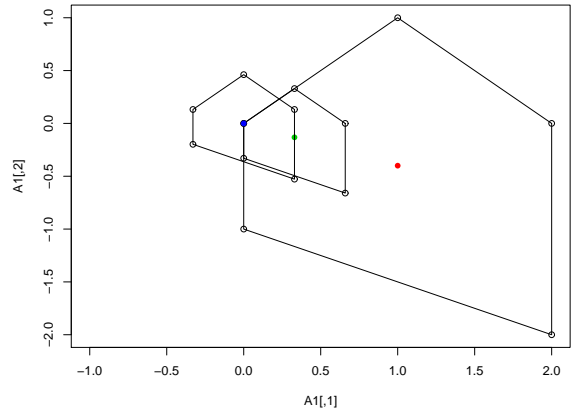}
\caption{Polygons representing a Configuration with 5 landmarks: the original one with red centroid, the scaled to Centroid-mean Size equal to 1, with green centroid, and the centered with respect location and translation with blue centroid}
\label{fig:3}
\end{figure}

$$CS =   \sqrt{\sum_{i=1}^p   d_E^2 (l_i,M_c)} =
\sqrt{\sum_{i=1}^p   \sum_{j=1}^k (c_{ij} - M_{cj})^2} = \sqrt{  \sum_{j=1}^k p \cdot Var(c_{.j})} $$

\noindent
being $d_E^2 (l_i,M_c)$ the square of the Euclidean distance between the $i$th landmark  $l_i$ and the centroid-mean $M_c$. Hence, the Centroid-mean Size depends on the sample variance and so, it will be very sensitive to outliers. This size can be computed as

\begin{verbatim}

> sqrt(sum(apply(M,2,var)*(p-1)))

\end{verbatim}

The coordinates of a scaled Configuration are now calculated dividing the original coordinates by $CS$

$$M_{cs} = \frac{M}{CS}$$

In Fig.~\ref{fig:2} we see the previous Configuration (with red centroid) and the scaled to Centroid-mean Size equal to 1 (the Configuration with green centroid).

\begin{figure}[b]
\sidecaption
\includegraphics[scale=.4]{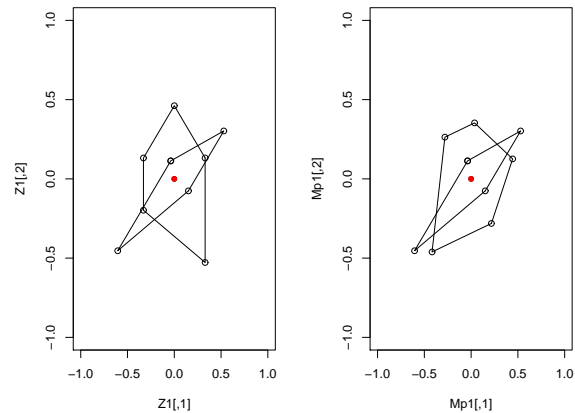}
\caption{Polygon rotated}
\label{fig:4}
\end{figure}

\subsection{Removing Location by Translation}

We remove the Location effect translating the Configuration matrix so that its centroid is the new origin. We do this with the R sentence

\begin{figure}[b]
\sidecaption
\includegraphics[scale=.4]{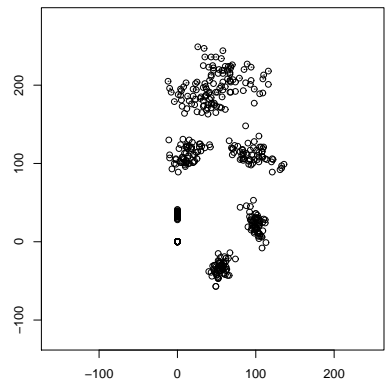}
\caption{Scatter plot of landmarks of Example 1}
\label{fig:5}
\end{figure}

\begin{verbatim}

> scale(M,scale=F)

\end{verbatim}

In Fig.~\ref{fig:3} we have the previous Configurations and the centered one (with blue centroid).

\subsection{Removing Orientation by Rotation}

After the effect of Size and Location have been removed, we estimate (by least-squares) the rotation matrix ${\boldmath \Gamma}$ minimizing the distance between Configurations  $M_1$ and $M_2$, i.e.,

$$\min_{\boldmath \Gamma} || M_2 - M_1 {\boldmath \Gamma} || = \min_{\boldmath \Gamma} \sqrt{trace((M_2 - M_1 \, {\boldmath \Gamma})' (M_2 - M_1 \, {\boldmath \Gamma}))}  $$

\noindent
where  ${\boldmath \Gamma}$ is a $k \times k$ square rotation matrix, a matrix that must be determined in order to maximize the correlation between the two
sets of landmarks, i.e., to minimize the distance between landmarks.  More precisely:

If $M_1$ and $M_2$ are two Configurations and  $X_1$ and $X_2$ the corresponding centered Configurations scaled to unit Centroid-mean Size, the (full) Procrustes distance is defined as

$$ d_F(M_1,M_2) = \sqrt{ trace \left( X_2 - \beta X_1 \, {\boldmath \Gamma}\right)' \left( X_2 - \beta X_1 \, {\boldmath \Gamma}\right)}=$$

$$=\sqrt{1-\left( \sum_{j=1}^k \lambda_j\right)^2}$$

\begin{figure}[b]
\sidecaption
\includegraphics[scale=.4]{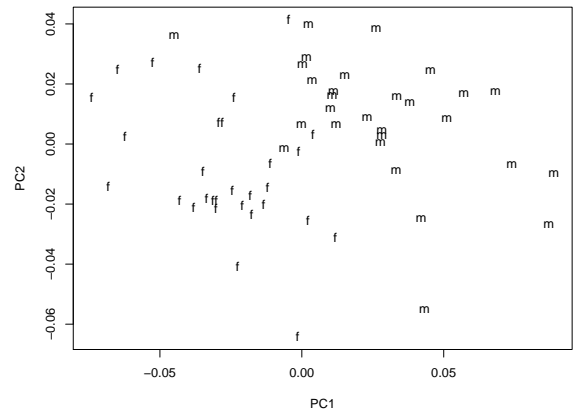}
\caption{Scores of the Principal Component analysis}
\label{fig:6}
\end{figure}

\noindent
where $\lambda_j$ are the diagonal elements of matrix $D$ in the
singular-value decomposition of $X_2' X_1$:

$$ X_2' X_1 = U D V' $$

But in fact, the previous problem is a known mathematical issue: If we have the previous singular-value decomposition of $X_2' X_1$, the rotation matrix we are looking for is

$${\boldmath \Gamma} = V U'$$

In Fig.~\ref{fig:4} we have two Configurations, before and after rotated one of them, according to the previous method.

\subsection{More than two Configurations (Generalized Procrustes Analysis)}

In the previous sections we have done, in three steps, what is called a classical {\it Partial Procrustes Analysis} because we have compared, from a descriptive point of view, two Configurations.

If we have more than two Configurations we have to do what is called a {\it Generalized Procrustes Analysis} in three steps:

\begin{figure}[b]
\sidecaption
\includegraphics[scale=.4]{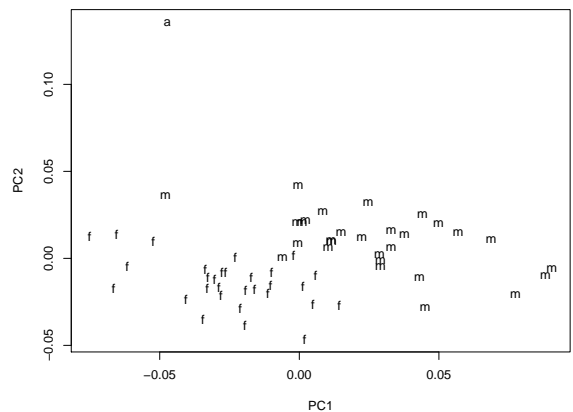}
\caption{Scores of the Principal Component analysis with the outlier {\tt a}}
\label{fig:7}
\end{figure}

\begin{enumerate}

\item All the Configurations must be standardized, i.e., their centroids matched at a common origin and scaled  to unit size.

\item We have to define a consensus or average Configuration  as a reference, called {\it Mean Shape}
because, in fact, it is the sample mean of all the Configurations, by vertices, i.e., the mean of the vertices $(i,j)$ (homologous coordinates) for the $n$ Configurations. The Mean Shape Configuration can be computed as

\begin{verbatim}

> apply(M,c(1,2),mean)

\end{verbatim}

\item Finally, we  perform a (partial) Procrustes superimposition  between the Mean Shape and each Configuration.

\end{enumerate}

\subsection{Configuration Projection onto the Tangent Space}

 The shape space defined by the previous Procrustes superimposition method is non-Euclidean and corresponds to a curved surface. This means that the distance between two landmarks is not the length of the segment joining them and hence, we cannot apply traditional statistics to the Procrustes coordinates  of the $n$ individuals.

 From a convenient point of view, the $n$ individuals  are {\sl aligned} in the $n \times kp$ matrix $X$ and then projected onto the (Kendall) tangent space, where the vectorized  Mean Shape $x_m$ (i.e., a vector of dimension $k p \times 1$) is the contact point between spaces. The projected (tangent) coordinates are obtained in a matrix $X^*$ as

 $$X^* = X(I_{kp} - x'_m x_m)$$

\noindent
where $I_{kp}$ is the $\; \; kp \times kp \; \;$ identity matrix.

Then, we can apply the usual statistical techniques to these projected coordinates, for example, classifying the resulting observations with the  scores of their Principal Components.

\begin{example}
\label{exa:1}
In paper \cite{HD}, 59 gorilla skulls were considered. We know, in the example but not in a real case, that 30 of them are female
 and 29 male. In their paper, 8 landmarks were considered. If we represent these data in a scatter plot we obtain Fig.~\ref{fig:5} where no apparent classification between males and females is observed.

If we make the four previous steps of the Generalized Procrustes analysis and conclude with a  Principal Component analysis of the scores, we obtain
Fig.~\ref{fig:6} where we cannot appreciate the two groups very clearly although the vertical bar at {\tt PC1}=0 is the usual classification rule taken for this example.
\end{example}

\begin{figure}[b]
\sidecaption
\includegraphics[scale=.4]{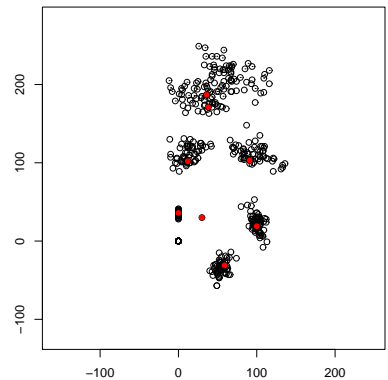}
\caption{Scatter plot of the 59 gorillas plus the outlier {\tt a} in red}
\label{fig:8}
\end{figure}

\section{Robust Morphometric Analysis from a Descriptive Point of View}
\label{sec:3}

Let us consider again the data of Example~\ref{exa:1} plus a Configuration, that we call {\tt a}, for which we replace the coordinates of the third landmark as in the following diagram:

\begin{verbatim}

     [,1] [,2]                           [,1] [,2]
[1,]   36  187                     [1,]    36  187
[2,]   59  -31                     [2,]    59  -31
[3,]    0    0       ---->         [3,]    30   30
[4,]    0   36                 a = [4,]     0   36
[5,]   12  102                     [5,]    12  102
[6,]   38  171                     [6,]    38  171
[7,]   91  103                     [7,]    91  103
[8,]  100   19                     [8,]   100   19

\end{verbatim}

If we give the four previous steps to perform a  Generalized Procrustes analysis, we obtain the classification given in
Fig.~\ref{fig:7} where all the individuals are in one group except outlier {\tt a}.

But in Fig.~\ref{fig:8} we see that {\tt a} is in the bulk of the data and also the mean shape in Fig.~\ref{fig:9}. Hence, no apparent solution is clear.

\begin{figure}[b]
\sidecaption
\includegraphics[scale=.4]{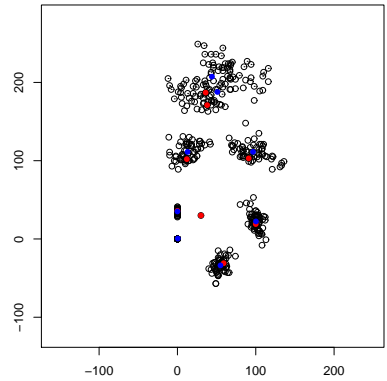}
\caption{Scatter plot of the 59 gorillas plus the outlier {\tt a} in red and the mean shape in blue}
\label{fig:9}
\end{figure}

\subsection{Removing the Size Effect in a Robust Way}

\begin{figure}[b]
\sidecaption
\includegraphics[scale=.4]{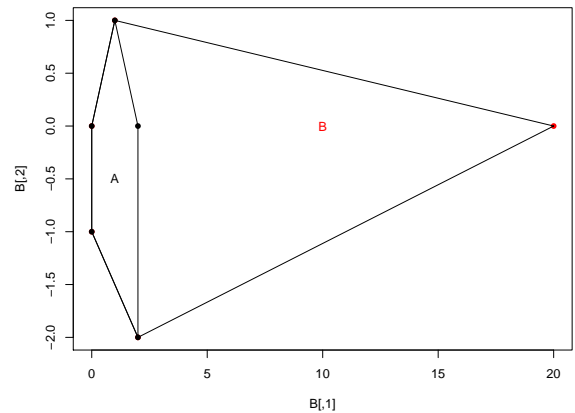}
\caption{Two configurations with a very different size}
\label{fig:10}
\end{figure}

We propose, instead of using the centroid-mean

$$M_c=\left( M_{c1} , ... , M_{ck} \right)= \left(\frac{1}{p} \sum_{i=1}^p c_{i1} , ... ,   \frac{1}{p} \sum_{i=1}^p c_{ik} \right)$$

\noindent
as before, that essentially is a sample mean computed with

\begin{verbatim}

> apply(M,2,mean)

\end{verbatim}

\begin{figure}[b]
\sidecaption
\includegraphics[scale=.4]{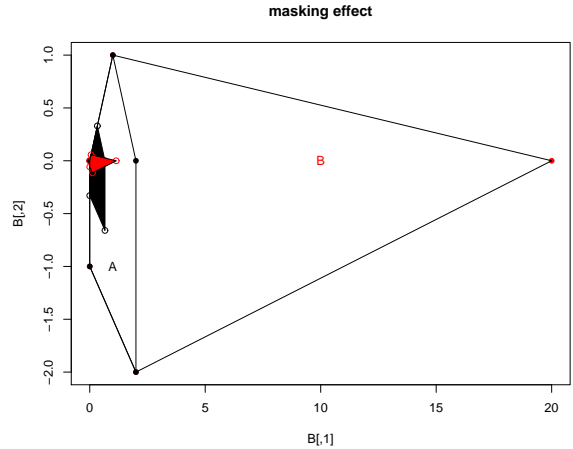}
\caption{Two configurations with a very different size after being standardized with the centroid-mean size}
\label{fig:11}
\end{figure}

\noindent
to use the median (or the trimmed-mean) by columns with the following two R sentences,

\begin{verbatim}
> apply(M,2,median)

> apply(M,2,mean,trim = .2)
\end{verbatim}

\noindent
obtaining in this way a more robust centroid. Now, instead of considering the Centroid-mean Size $CS$ that, as we saw before, is essentially the variance

$$CS =   \sqrt{\sum_{i=1}^p   d_E^2 (l_i,M_c)} =
\sqrt{\sum_{i=1}^p   \sum_{j=1}^k (c_{ij} - M_{cj})^2} = \sqrt{  \sum_{j=1}^k p \cdot Var(c_{.j})} $$

\noindent
computed with

\begin{verbatim}

> sqrt(sum(apply(M,2,var)*(p-1)))

\end{verbatim}

\noindent
we propose to use the Median Absolute Deviation $MAD$, defined as

$$MAD = 1'4826 \, M_e \left\{ \left|X_i - M_e(X_i)\right|  \right\}$$

\noindent
obtaining with it what we call the {\it Centroid-median Size}

$$MS =    \sum_{j=1}^k MAD(c_{.j}) $$

\noindent
computed as

\begin{verbatim}

> sum(apply(M,2,mad))

\end{verbatim}

\noindent
and that satisfies the Size invariance property  $\, MS(a M)  =  a MS(M) \,$ for any positive scalar $a$.

In this way we obtain a robust size measure. For instance, considering the following two configurations {\tt A} and {\tt B}

\begin{verbatim}

> A                           > B
     [,1] [,2]                     [,1] [,2]
[1,]    2    0                [1,]   20    0
[2,]    1    1                [2,]    1    1
[3,]    0    0                [3,]    0    0
[4,]    0   -1                [4,]    0   -1
[5,]    2   -2                [5,]    2   -2

\end{verbatim}

\noindent
that differ in just a wrong digit in the first landmark of Configuration {\tt B},  the classical Centroid-mean Size is very sensitive:

\begin{verbatim}

> sqrt(sum(apply(A,2,var)*4))
[1] 3.03315
> sqrt(sum(apply(B,2,var)*4))
[1] 17.44706

\end{verbatim}

\begin{figure}[b]
\sidecaption
\includegraphics[scale=.4]{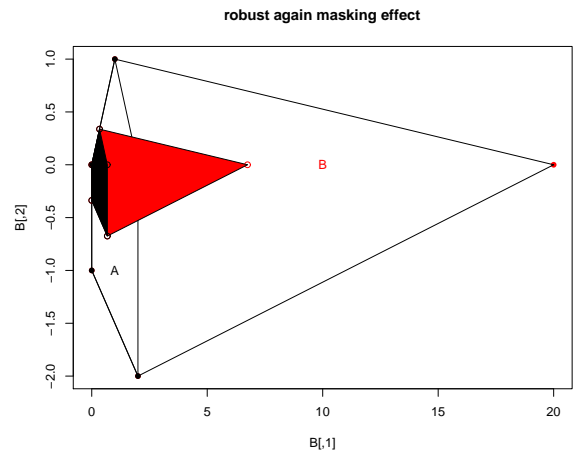}
\caption{Two configurations with a very different size after being standardized with the centroid-median size}
\label{fig:12}
\end{figure}

\noindent
but not the the Centroid-median Size:

\begin{verbatim}

> sum(apply(A,2,mad))
[1] 2.9652
> sum(apply(B,2,mad))
[1] 2.9652

\end{verbatim}

And what is more importante, this new size measure keeps the relative size  of the Configurations avoiding a possible masking effect.
For instance,  in Fig.~\ref{fig:10}, if we divide both Configurations by the classical Centroid-mean Size we obtain Fig.~\ref{fig:11} and they would probably be classified in the same final group. Nevertheless, standardizing them with the new robust Centroid-median Size, we see in  Fig.~\ref{fig:12} that the differences in size  between them, remain.

\begin{figure}[b]
\sidecaption
\includegraphics[scale=.4]{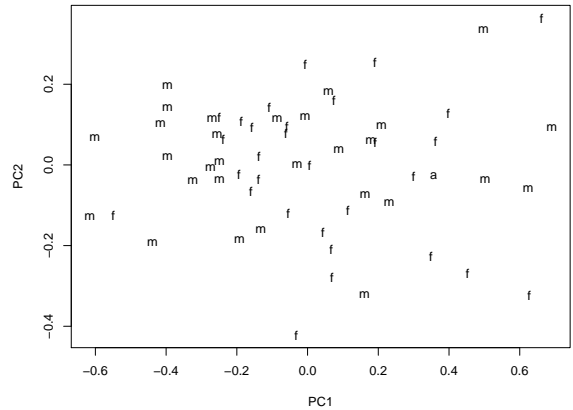}
\caption{Classification of gorillas with the median as {\it mean shape}}
\label{fig:13}
\end{figure}

Hence, instead of dividing the coordinates of the Configuration by the classical Centroid-mean Size $CS$, we  propose to divide the configuration $M$ (the coordinates) by the robust Centroid-median Size to distinguish between individuals in a better way, avoiding a possible masking effect,

$$M_{rs} = \frac{M}{MS}$$

\subsection{Removing Location in a Robust Way}

In the same way as we have removed the location effect in a classical way, translating the Configuration matrix so that its centroid-mean was the new origin, with the sentence

\begin{verbatim}

scale(M,scale=F) = scale(M,scale=F,center=apply(M,2,mean))

\end{verbatim}

\noindent
subtracting the mean of each column to the whole column, in the robust version we subtract the median with the sentence

\begin{figure}[b]
\sidecaption
\includegraphics[scale=.4]{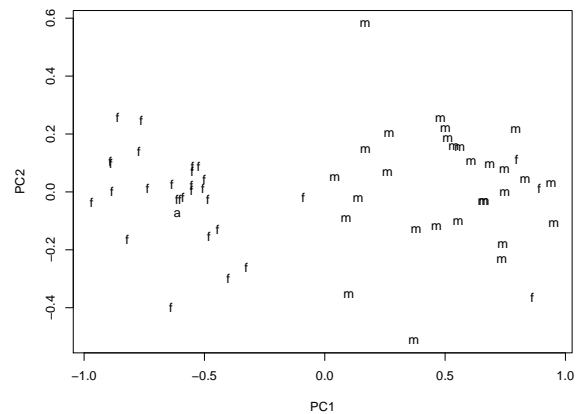}
\caption{Classification of gorillas with the $0'2$-trimmed mean as {\it mean shape}}
\label{fig:14}
\end{figure}

\begin{verbatim}

> scale(M,scale=F,center=apply(M,2,median))

\end{verbatim}

\noindent
being the centroid-median the new origin.

After robustifying with respect scale and location we keep the classical rotation matrix for the robust coordinates. These three steps are in our new R function {\tt rpgpa1} that can be obtained on request from the authors. We conclude the process with the same coordinates projection formula than before.

\subsection{More than two Configurations}

If there are more than two Configurations, the key point in the robustification process is the selection of a robust  {\it mean shape}, that in the classical Morphometric analysis is the sample mean of the Configuration coordinates. In our robust version we propose to choose as {\it mean shape} the
median of the Configuration coordinates, obtaining Fig.~\ref{fig:13} for the gorillas example (after doing a classical Principal Component analysis of the scores).

Considering the $0'2$-trimmed mean as {\it mean shape} we obtain Fig.~\ref{fig:14}.
Finally, considering the $0'1$-trimmed mean as {\it mean shape} we obtain Fig.~\ref{fig:15}.

\begin{figure}[b]
\sidecaption
\includegraphics[scale=.4]{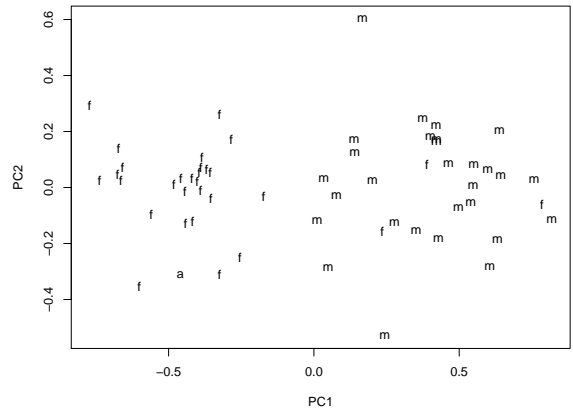}
\caption{Classification of gorillas with the $0'1$-trimmed mean as {\it mean shape}}
\label{fig:15}
\end{figure}

These three options are in
new R function {\tt rpgpa2} that can be composed with {\tt rpgpa1}.

\section{Classical Morphometric Analysis from an Inferential Point of View}
\label{sec:4}

Instead of considering a descriptive morphometric analysis it is more interesting to test if there are significant differences
between  two Configurations. From a classical point of view, we have the following result in
 \cite{LC} and  \cite{S}: If $X_1$ and $X_2$ are two scaled and centered Configurations
 with $p \times k$ landmarks, the {\it Residual Distance} between Configurations $X_1$ and $X_2$  is defined as

$$ || X_2 - X_1 ||^2 = trace\left[(X_2 - X_1)' (X_2 - X_1)\right].  $$

As saw in the previous sections, the $k \times k$ square rotation matrix ${\boldmath \Gamma}$  is determined such that the Procrustes distance between these two Configurations  $X_1$ and $X_2$ (i.e., between landmarks) is minimal

$$\min_{\boldmath \Gamma} || X_2 - X_1 {\boldmath \Gamma} ||^2 = \min_{\boldmath \Gamma} trace\left[(X_2 - X_1 \, {\boldmath \Gamma})' (X_2 - X_1 \, {\boldmath \Gamma})\right] . $$

This minimum (i.e., after matching, i.e., after translation, rotation and scaling) that we obtain  is called the  {\it Procrustes statistic}:

$$ G(X_1,X_2) =
 \min_{{\boldmath \Gamma}}  || X_2 - X_1 {\boldmath \Gamma} ||^2.$$

  Under the null hypothesis $H_0$ that there is no systematic difference between Configurations $X_1$ and $X_2$, i.e., they belong to the same group, or
  more precisely,  that they are of the form

 $$X_2 = X_1 + \eta \, {\boldmath e} $$

 \noindent
 where all the elements of Configuration $\; {\boldmath e} \; $ are i.i.d. $N(0,1)$,  then

 $$G(X_1,X_2)  \approx  \eta^2 \, \chi^2_g$$

 \noindent
 i.e., $G_s(X_1,X_2) =  G(X_1,X_2)/\eta^2  \approx   \chi^2_g \,$,
where $\, g=kp-k(k+1)/2-1 \,$ obtaining so, a way to compute the tail probabilities (p-values) for testing $H_0$. It must be $p> (k+1)/2+1/k$ and obviously an integer.

\section{Robust Morphometric Analysis from an Inferential Point of View}
\label{sec:5}

The standard normality of the landmarks is a very difficult assumption to assume and check. For this reason we shall use robust methods for testing $H_0$ assuming that the $p \times k$ landmarks of $e$ follow, not a standard normal distribution but a contaminated normal model:

$$\frac{X_2 - X_1}{\eta}   \leadsto  (1-\epsilon) N(0,1) + \epsilon N(0,\nu)$$

In this section we are going to compute the tail probabilities (p-values), assuming this contaminated model, using
a VOM+SAD approximation.

We use this scale contaminated normal mixture model because  the Configurations are matched at the common centroid that is the new origin and equal to 0, being the contamination in the scale the source of contamination in the observations.

\subsection{Von Mises Approximations for the p-value of the Procrustes Statistic}

In order to test the null hypothesis $H_0$ that there is no systematic difference between the standardized Configurations $X_1$ and $X_2$, using the
Procrustes statistic  $G_s(X_1,X_2)$  that follow a $\chi^2_g$ distribution under a normal model, we have the following result.

\begin{proposition}
Let $G_s(X_1,X_2)$ be the Procrustes statistic, that follows a $\chi_g^2$ distribution
 when the underlying model is a normal distribution,
$\Phi_{\mu,\sigma}$. If the previous null hypothesis $H_0$ holds, the von Mises (VOM) approximation for the functional tail probability (if $F$ is close to the normal $\Phi_{\mu,\sigma}$) is

$$P_F\{G_s(X_1,X_2) > t \}  \simeq
 g  \int_{-\infty}^\infty
P\{\chi_{g-1}^2 > t- (
\mbox{$ \frac{x-\mu}{\sigma}$}  )^2\}
\, dF(x)
- (g-1)
P\{\chi_{g}^2 > t\}.$$
\end{proposition}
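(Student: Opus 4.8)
The statement is a first‑order von Mises expansion of the tail‑probability functional
$T(F) = P_F\{G_s(X_1,X_2) > t\}$ around the normal $\Phi_{\mu,\sigma}$, so the natural approach is: (i) identify $T(F)$ explicitly when $F$ is the sampling distribution of each landmark coordinate; (ii) compute its first‑order von Mises derivative, i.e. the influence function $\mathrm{IF}(x;T,\Phi_{\mu,\sigma})$; (iii) assemble the Taylor approximation $T(F) \simeq T(\Phi_{\mu,\sigma}) + \int \mathrm{IF}(x;T,\Phi_{\mu,\sigma})\,dF(x)$ and recognize that the right‑hand side of the Proposition is exactly this sum, rewritten with the $\chi^2$ survival functions. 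The key structural fact we are allowed to quote from Section~\ref{sec:4} is that under $H_0$ the standardized statistic is $\chi^2_g$‑distributed when the coordinates are i.i.d.\ $N(\mu,\sigma)$; equivalently $G_s \,{\buildrel d \over =}\, \sum_{j=1}^{g} Z_j^2$ for i.i.d.\ standard normal $Z_j$, where $g = kp - k(k+1)/2 - 1$.

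First I would make precise the one‑dimensional reduction. After centering, scaling and optimal rotation, the residual configuration lies in a $g$‑dimensional linear subspace of $\mathbb{R}^{kp}$, and under $H_0$ the coordinates of $(X_2-X_1)/\eta$ are i.i.d.\ with common distribution $F$ (standard normal under the model). Writing $G_s$ as a quadratic form $\sum_{j=1}^{g} Y_j^2$ in $g$ i.i.d.\ coordinates $Y_j \sim F$ — this is the content of the distributional statement preceding the Proposition — the functional becomes $T(F) = P\{\sum_{j=1}^g Y_j^2 > t\}$ with $Y_j$ i.i.d.\ $F$. This is a symmetric functional of a sum of i.i.d.\ variables, which is precisely the situation where the von Mises calculus gives a clean influence function: perturbing $F$ in direction $\delta_x - F$ changes exactly one summand's distribution. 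Concretely, $\frac{d}{d\varepsilon} T\big((1-\varepsilon)\Phi_{\mu,\sigma} + \varepsilon \delta_x\big)\big|_{\varepsilon=0} = g\big[ P\{\chi^2_{g-1} > t - ((x-\mu)/\sigma)^2\} - P\{\chi^2_g > t\}\big]$, because one coordinate becomes the point mass at $x$ (contributing $((x-\mu)/\sigma)^2$ to the sum, leaving $\chi^2_{g-1}$ from the other $g-1$ coordinates) and the centering term $-P\{\chi^2_g>t\}$ comes from differentiating the $(1-\varepsilon)$ weights. That $\int \mathrm{IF}\,d\Phi_{\mu,\sigma} = 0$ is automatic and provides a consistency check.

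Then I would simply substitute into the von Mises formula $T(F) \simeq T(\Phi_{\mu,\sigma}) + \int \mathrm{IF}(x)\,dF(x)$: the leading term is $T(\Phi_{\mu,\sigma}) = P\{\chi^2_g > t\}$; the integral of the influence function contributes $g\int P\{\chi^2_{g-1} > t-((x-\mu)/\sigma)^2\}\,dF(x) - g\,P\{\chi^2_g > t\}$. Adding, the $P\{\chi^2_g>t\}$ terms combine to $(1-g)P\{\chi^2_g>t\} = -(g-1)P\{\chi^2_g>t\}$, yielding exactly the displayed right‑hand side. So after the influence function is in hand the rest is one line of algebra.

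The main obstacle is Step (ii): justifying that the relevant influence function is the ``change one summand'' quantity above, which requires being careful about which coordinates of the aligned, projected configuration are genuinely i.i.d.\ and free (the count $g$, and the fact that the optimal rotation $\boldsymbol{\Gamma} = VU'$ does not destroy the i.i.d.\ structure to first order under $H_0$) — essentially arguing that the functional $T$ really does depend on $F$ only through the distribution of a sum of $g$ i.i.d.\ copies. A secondary point is the regularity needed for the von Mises expansion to be valid — Fréchet (or at least Gateaux) differentiability of $T$ at $\Phi_{\mu,\sigma}$ with a remainder that is $o$ of the distance — but since the Proposition only claims ``$\simeq$'' for $F$ close to the normal, invoking the standard first‑order von Mises expansion (as used elsewhere in this line of work on saddlepoint‑corrected tail approximations) suffices and the differentiability of the smooth $\chi^2$ tail in its noncentrality argument makes the remainder control routine.
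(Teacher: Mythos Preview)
Your proposal is correct and follows essentially the same route as the paper: both compute the Tail Area Influence Function by differentiating $P_{(1-\varepsilon)\Phi_{\mu,\sigma}+\varepsilon\delta_x}\{G_s>t\}$ at $\varepsilon=0$ via the ``replace one of the $g$ i.i.d.\ squared summands by $((x-\mu)/\sigma)^2$'' argument, obtaining $g\,P\{\chi^2_{g-1}>t-((x-\mu)/\sigma)^2\}-g\,P\{\chi^2_g>t\}$, and then plug this into the first‑order von Mises expansion. The paper simply states the TAIF and integrates, whereas you spell out the symmetric‑functional reasoning and flag the regularity and i.i.d.\ reduction issues, but the argument is the same.
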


\begin{proof}
The von Mises (VOM) approximation for the functional tail probability is (if $F$ is close to the normal $\Phi_{\mu,\sigma}$)

\begin{equation}
p_g^F = P_F\{G_s(X_1,X_2) > t \}  \simeq  p_g^\Phi + \int \mbox{TAIF}(x;t;\chi^2_g,\Phi_{\mu,\sigma}) \, dF(x)
\label{m*2}
\end{equation}

\noindent
where TAIF is the Tail Area Influence Function defined in \cite{FR}.

Replacing the normal model by the contaminated normal model
 $\Phi^\epsilon = (1-\epsilon) \,
\Phi_{\mu,\sigma} + \epsilon \, \delta_x \;$ and computing the
derivative at $\epsilon=0$ we obtain that

\begin{eqnarray*}
\mbox{TAIF}(x;t;\chi^2_g,\Phi_{\mu,\sigma}) & = & \displaystyle
\left. \frac{\partial}{\partial \epsilon}
P_{\Phi^\epsilon}\{G_s(X_1,X_2) > t\}
\right|_{\epsilon=0} \\
 & = & g P\{\chi_{g-1}^2 > t- (x-\mu)^2/\sigma^2\}
- g P\{\chi_g^2 > t\}
\end{eqnarray*}

\noindent
integrating  now, we obtain the result.
\qed
\end{proof}

Considering a scale contaminated normal (SCN) model

$$(1-\epsilon) N(0,1) + \epsilon N(0,\nu)$$

\noindent
 the VOM approximation is

$$
p_{g}^F \simeq (1 - g \, \epsilon)
P\{\chi_{g}^2 > t\} +
 g \, \epsilon \int_{-\infty}^\infty
P\{\chi_{g-1}^2 > t- x^2 \}
\, d\Phi_{0,\nu}(x).$$

In Table~\ref{Table:1} appear the {\sl Exact} values
(obtained through a simulation of 100.000 samples) and the VOM
approximations when $\epsilon =0'05$, $\nu=2$ and
$g=3$.

\begin{table}
\caption{{\sl Exact} and approximate p-values with $g=3$}
\label{Table:1}
\begin{tabular}{p{2.4cm}p{2.4cm}p{2.4cm}}
\hline\noalign{\smallskip}
$t$ & ``exact" & approximate   \\
\noalign{\smallskip}\hline\noalign{\smallskip}
6  & $0'149$ & $0'148$  \\
8  & $0'077$ & $0'076$    \\
10 & $0'042$ & $0'042$    \\
12 & $0'024$ & $0'025$    \\
14 & $0'016$ & $0'016$    \\
16 & $0'011$ & $0'011$    \\
18 & $0'007$ & $0'008$    \\
\noalign{\smallskip}\hline
\end{tabular}
\end{table}

\

To obtain the previous numerical results we had to deal with
numerical integration. Sometimes, we would
like to have analytic expressions of $p_{g}^F$ to value the effect of contamination $\epsilon$, etc. For this reason and for controlling the relative error of the approximation, in the next section we shall compute the Saddlepoint approximation for the p-value of the Procrustes Statistic.

\subsection{Saddlepoint Approximations for the p-value of the Procrustes Statistic}

Using
 Lugannani and Rice formula, \cite{LR}, for the sample mean of $g$ independent
square normal variables, we obtain the VOM+SAD approximation given in the next result.

\begin{proposition}
Let $G_s(X_1,X_2)$ be the Procrustes statistic, that follows a $\chi_g^2$ distribution
 when the underlying model is a normal distribution,
$\Phi_{\mu,\sigma}$. If the null hypothesis $H_0$ holds, the saddlepoint approximation of the von Mises expansion, VOM+SAD approximation, for the functional tail probability (if $F$ is close to the normal $\Phi_{\mu,\sigma}$) is

\begin{equation}
P_F\left\{ G_s(X_1,X_2) > t \right\} \simeq
P\left\{ \chi^2_g > t \right\} - B + B \, \int_{-\infty}^{\infty}
\frac{ \sqrt{g} }{\sqrt{t}} \,
e^{\frac{(t-g)(x-\mu)^2}{2t\sigma^2}}
\; dF(x)
\label{m*8}
\end{equation}

\noindent
where
$B= \frac{g \, \sqrt{g}}{\sqrt{\pi} \, (t-g) } \,
e^{-(t-g - g\cdot \log (t/g) )/2} $.
\end{proposition}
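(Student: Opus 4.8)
The plan is to start from the VOM approximation already established in the previous proposition, namely
$$P_F\{G_s(X_1,X_2) > t\} \simeq g\int_{-\infty}^\infty P\{\chi_{g-1}^2 > t - (\mbox{$\frac{x-\mu}{\sigma}$})^2\}\, dF(x) - (g-1)P\{\chi_g^2 > t\},$$
and replace each exact $\chi^2$ tail probability by its Lugannani--Rice saddlepoint approximation. Since $\chi_g^2/g$ is the sample mean of $g$ i.i.d.\ squared standard normals, the cumulant generating function of a single $\chi_1^2$ summand is $K(s) = -\frac12\log(1-2s)$, and the saddlepoint equation $K'(\hat s) = t/g$ gives the explicit solution $\hat s = \frac12(1 - g/t)$. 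First I would record the resulting closed form for $P\{\chi_g^2 > t\}$: with $w = \mathrm{sgn}(\hat s)\sqrt{2g(\hat s t/g - K(\hat s))}$ and $u = \hat s\sqrt{g\,K''(\hat s)}$, the Lugannani--Rice formula yields $P\{\chi_g^2 > t\} \approx 1-\Phi(w) + \phi(w)(1/u - 1/w)$, and the leading correction term $\phi(w)(1/u-1/w)$ simplifies, after substituting $\hat s$, $K(\hat s) = \frac12\log(t/g)$ and $K''(\hat s) = 2(t/g)^2$, to precisely the quantity $B = \frac{g\sqrt g}{\sqrt\pi\,(t-g)}\,e^{-(t-g-g\log(t/g))/2}$ up to the $1/u - 1/w$ bookkeeping; the same computation with $\chi_{g-1}^2$ evaluated at the shifted argument $t - (\frac{x-\mu}{\sigma})^2$ produces the integrand.

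The second step is the crucial algebraic reduction: I would apply the saddlepoint formula to $P\{\chi_{g-1}^2 > t - (\frac{x-\mu}{\sigma})^2\}$ but \emph{expand it around the point $x=\mu$}, i.e.\ treat $(\frac{x-\mu}{\sigma})^2$ as a small perturbation of the argument $t$ and of the order parameter $g$ (since $g-1$ versus $g$ is itself an $O(1)$ shift of the same order as the Von Mises first-order term). The key observation is that both the $1-\Phi(w)$ piece of the $\chi_{g-1}^2$ term and the $g$-prefactor conspire so that the leading ($x$-independent) contribution of $g\int P\{\chi_{g-1}^2 > t - \cdots\}dF - (g-1)P\{\chi_g^2>t\}$ collapses to $P\{\chi_g^2 > t\}$; what survives is the correction term $B$ times a Gaussian-type exponential factor in $(x-\mu)^2$. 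Carrying the exponent: replacing $g \to g$ and $t \to t - (\frac{x-\mu}{\sigma})^2$ in $-(t-g-g\log(t/g))/2$ and linearizing in $(\frac{x-\mu}{\sigma})^2$ gives an increment $\frac{(t-g)}{2t}\,(\frac{x-\mu}{\sigma})^2$, which is exactly the $e^{(t-g)(x-\mu)^2/(2t\sigma^2)}$ appearing in \eqref{m*8}; similarly the $\sqrt{g-1}/\sqrt{t - (\frac{x-\mu}{\sigma})^2}$-type prefactor linearizes to $\sqrt g/\sqrt t$ to the order retained. Integrating against $dF(x)$ then gives the stated formula.

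I expect the main obstacle to be controlling the order of the approximation consistently — i.e.\ justifying that we keep the Von Mises expansion to first order in the contamination while simultaneously keeping the Lugannani--Rice expansion to its leading correction, and that the cross terms (first-order-in-$\epsilon$ times second-order-in-saddlepoint, and the $\chi_{g-1}^2$ versus $\chi_g^2$ index shift) are genuinely of higher order and can be dropped. Concretely, the delicate point is the simultaneous linearization in $(x-\mu)^2$ of \emph{both} the exponential and the algebraic prefactor in the saddlepoint formula for the shifted-argument chi-square tail, and checking that the $x$-independent remainder reassembles exactly into $P\{\chi_g^2>t\}-B$ rather than some nearby expression; this is where a naive substitution would leave spurious lower-order terms. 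Once that accounting is done carefully, the remaining manipulations — solving the quadratic saddlepoint equation, substituting $K$, $K'$, $K''$, and simplifying the exponent — are routine, and the factor $B$ drops out with its stated closed form.
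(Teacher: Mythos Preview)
Your proposal takes a genuinely different route from the paper. You plan to start from the exact VOM formula of the previous proposition (which already contains $P\{\chi_{g-1}^2>t-(\frac{x-\mu}{\sigma})^2\}$ and $P\{\chi_g^2>t\}$) and then apply the Lugannani--Rice approximation \emph{separately to each of those fixed chi-square tails}, afterwards linearising in $(x-\mu)^2$ and reconciling the $g-1$ versus $g$ shift. The paper instead reverses the order of operations: it writes the Lugannani--Rice formula for $P_\Phi\{G_s/g>t\}$ once, but regards the cumulant generating function
\[
K(\theta)=\log\int e^{\theta(u-\mu)^2/\sigma^2}\,d\Phi_{\mu,\sigma}(u)
\]
and hence the saddlepoint $z_0$, $w$, $r$ as \emph{functionals of the underlying distribution}. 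The TAIF is then obtained by differentiating the whole Lugannani--Rice expression with respect to the contamination parameter $\epsilon$ at $\epsilon=0$, producing $\dot w$, $\dot r$, $\dot K(z_0)$, $\dot z_0$; after dropping $O(g^{-1/2})$ terms this collapses to
\[
\mbox{TAIF}=\frac{\phi_s(w)}{r_1}\sqrt{g}\,[\dot K(z_0)-\dot z_0\,t],
\]
and the influence functions $\dot K(z_0)$ and $\dot z_0$ are computed explicitly from the saddlepoint equation. Plugging this TAIF back into the generic VOM expansion $p_g^F\simeq p_g^\Phi+\int\mbox{TAIF}\,dF$ gives the stated formula directly.

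What each approach buys: the paper's functional-differentiation route never needs to handle the $\chi_{g-1}^2$ versus $\chi_g^2$ index shift or to argue that the $x$-independent pieces ``reassemble exactly into $P\{\chi_g^2>t\}-B$'', which you rightly flag as the delicate step in your plan; those cancellations are automatic because one is simply differentiating a single formula. Your approach, if the bookkeeping is carried out carefully, should lead to the same leading-order expression, and it has the pedagogical advantage of making transparent that one is literally saddlepoint-approximating the chi-square tails appearing in the first proposition; but the simultaneous linearisation you worry about is precisely the complication the paper's method is designed to avoid.
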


\begin{proof}
If $G_s(X_1,X_2)$ follows a $\chi_g^2$
distribution, and $Y_1,...,Y_g$ are $g$
independent gamma distributions $\gamma(1/2,1/2)$ with moment generating
function $M$ and cumulant generating function $K=\log M$, it is,
following  \cite{LR}, \cite{D} or \cite{J},

\begin{eqnarray}
P_\Phi\left\{ \frac{G_s(X_1,X_2)}{g} > t \right\}  & =&
P\left\{ \frac{1}{g} \sum_{i=1}^g Y_i > t \right\} \nonumber \\
  &  & \nonumber \\
  & = & 1 - \Phi_s(w) + \phi_s(w) \left\{ \frac{1}{r} - \frac{1}{w} +
O(g^{-3/2}) \right\}
\label{m*7}
\end{eqnarray}

\noindent
where $\Phi_s$ and $\phi_s$ are the cumulative distribution and density
functions of the standard normal distribution.

If $K$ is the cumulant generating function, that is the functional of $\Phi_{\mu,\sigma}$,

$$K(\theta) = \log \int_{-\infty}^\infty e^{\theta \,
(u-\mu)^2/\sigma^2}
\; d\Phi_{\mu,\sigma}(u)$$

\noindent
and  $z_0$ is the (functional) saddlepoint, i.e., it is the solution
of the equation $K'(z_0) = t$,
the functionals that appear in (\ref{m*7}) are

$$w = sign(z_0) \, \sqrt{2 \, g \cdot (z_0 \, t - K(z_0))}
= \sqrt{g} \, sign(z_0) \, \sqrt{2 \, (z_0 \, t - K(z_0))}
:= \sqrt{g}\, w_1
$$

$$r = z_0 \, \sqrt{g \cdot K''(z_0)} = \sqrt{g} \, z_0 \, \sqrt{K''(z_0)}
:= \sqrt{g} \, r_1. $$

As we saw before, the VOM approximation for the tail probability depends on the TAIF. To obtain the TAIF of $\; G_s(X_1,X_2)/g \;$ at $\; \Phi_{\mu,\sigma} \;$
we have to replace the model
$\Phi_{\mu,\sigma}$ by the contaminated model
$\Phi^\epsilon= (1-\epsilon) \Phi_{\mu,\sigma} + \epsilon \, \delta_x$
in all the functionals
in the right side of
(\ref{m*7}) that depend on  $\Phi_{\mu,\sigma}$ , and then to obtain the
derivative at $\epsilon=0$; this process is represented with a dot over
the functional.
 Since  $\; \phi'_s(w)=-\phi_s(w) \, w \; \;$ and
$ \;\phi_s(w)\leq 1  \;$, we obtain that

$$\mbox{TAIF}\left(x;t;\frac{G_s(X_1,X_2)}{g},\Phi_{\mu,\sigma}\right)  =
\left. \frac{\partial}{\partial \epsilon}
P_{\Phi^{\epsilon}} \left\{\frac{G_s(X_1,X_2)}{g} >  t \right\}
\right|_{\epsilon=0} $$

$$= - \phi_s(w)
\stackrel{\bullet}{w} + \phi'_s(w) \stackrel{\bullet}{w}
\left\{ \frac{1}{r} - \frac{1}{w} + O(g^{-3/2}) \right\} + \phi_s(w)
\left\{- \frac{\stackrel{\bullet}{r}}{r^2} +
\frac{\stackrel{\bullet}{w}}{w^2} + O(g^{-3/2}) \right\}
$$

$$=\phi_s(w) \left[
- \frac{ w \, \stackrel{\bullet}{w}}{r}  -
\frac{ \stackrel{\bullet}{r}}{r^2} +
\frac{\stackrel{\bullet}{w}}{w^2} \right]+O(g^{-1}) $$

$$=\phi_s(w) \left[
- \frac{\sqrt{g} \, w_1 \, \sqrt{g} \, \stackrel{\bullet}{w_1}}{
\sqrt{g} \,r_1}  -
\frac{ \sqrt{g}\, \stackrel{\bullet}{r_1}}{g \, r_1^2} +
\frac{\sqrt{g}\,\stackrel{\bullet}{w_1}}{g \, w_1^2}
\right]+O(g^{-1})$$

$$ =  \frac{\phi_s(w)}{r_1} \left[
- \sqrt{g} \cdot w_1 \,
\stackrel{\bullet}{w}_1
\right]+ O(g^{-1/2}) $$

\noindent
because the functionals $w_1,
\stackrel{\bullet}{w}_1, r_1$ and
$\stackrel{\bullet}{r}_1$ do not depend on $g$.
Since

$$\stackrel{\bullet}{w}_1 = sign(z_0) \, \frac{ 2 (
\stackrel{\bullet}{z}_0 \, t - \stackrel{\bullet}{K}(z_0))}{
2 \sqrt{ 2 ( z_0 \, t - K(z_0))}} =
\frac{\stackrel{\bullet}{z}_0 \, t - \stackrel{\bullet}{K}(z_0))}{
w_1}$$

\noindent
it will be

\begin{equation}
\mbox{TAIF}\left(x;t;\frac{G_s(X_1,X_2)}{g},\Phi_{\mu,\sigma}\right)  =
 \frac{\phi_s(w)}{r_1} \sqrt{g}
\left[
\stackrel{\bullet}{K}(z_0) - \stackrel{\bullet}{z}_0 \, t
\right]+ O(g^{-1/2}).
\label{res*1}
\end{equation}

Hence, we have to compute the influence functions $\stackrel{\bullet}{K}(z_0)$ and $\stackrel{\bullet}{z}_0$. To do this, because

$$K'(\theta) = \displaystyle
\frac{ \displaystyle
\int_{-\infty}^\infty e^{\theta \, (u-\mu)^2/\sigma^2} \;
\left(\frac{u-\mu}{\sigma}\right)^2
\; d\Phi_{\mu,\sigma}(u)}{\displaystyle
\int_{-\infty}^\infty e^{\theta \, (u-\mu)^2/\sigma^2} \;
\; d\Phi_{\mu,\sigma}(u)}$$

\noindent
from the saddlepoint equation, $\; K'(z_0)=t \;$, we obtain

$$
\int_{-\infty}^\infty e^{z_0 \, (u-\mu)^2/\sigma^2} \;
\left[\left(\frac{u-\mu}{\sigma}\right)^2 - t\right]
\; d\Phi_{\mu,\sigma}(u)=0.$$

Replacing again the model by the contaminated model
 $\Phi^\epsilon = (1-\epsilon) \,
\Phi_{\mu,\sigma} + \epsilon \, \delta_x \;$ before obtaining the
derivative at $\epsilon=0$, and making the
change of variable $(u-\mu)/\sigma =y$,
we obtain

$$
\stackrel{\bullet}{z}_0 \left[
\int_{-\infty}^\infty e^{z_0 \, y^2} \, y^4
\, d\Phi_{s}(y)
- t
\int_{-\infty}^\infty e^{z_0 \, y^2} \, y^2
\, d\Phi_{s}(y)\right] +
 e^{z_0 \, (x-\mu)^2/\sigma^2} \,
\left[\left(\frac{x-\mu}{\sigma}\right)^2 - t\right]
=0$$

\noindent
i.e.,

$$
\stackrel{\bullet}{z}_0 = \frac{1}{2} \,
t^{-5/2} \,
e^{\frac{(t-1)(x-\mu)^2}{2t\sigma^2}} \,
\left[t - \left(\frac{x-\mu}{\sigma}\right)^2 \right].$$

In a similar way, we obtain that

$$\stackrel{\bullet}{K}(z_0)
= \frac{3}{2} \,
t^{-1/2} \, e^{z_0 \, (x-\mu)^2/\sigma^2} - \frac{1}{2} \,
 t^{-3/2} \, e^{z_0 \, (x-\mu)^2/\sigma^2} \,
\left(\frac{x-\mu}{\sigma}\right)^2 - 1.$$

Also it is

$$r_1=z_0 \, \sqrt{K''(z_0)} = \frac{t-1}{\sqrt{2}} \hspace{1cm}
\mbox{and} \hspace{1cm}
\phi_s(w) = \frac{1}{\sqrt{2\pi}} \, e^{-g \cdot (t-1-\log t)/2}.$$

\noindent
Therefore, from (\ref{res*1}), it will be

$$
\mbox{TAIF}\left(x;t;\frac{G_s(X_1,X_2)}{g},\Phi_{\mu,\sigma}\right) =
A \left(\frac{1}{\sqrt{t}}  \, e^{\frac{(t-1)(x-\mu)^2}{2t\sigma^2}} -
1\right) + O(g^{-1/2})$$

\noindent
where

$$A= \frac{\sqrt{g}}{\sqrt{\pi} \, (t-1) } \,
e^{-g \cdot (t-1 - \log t)/2}. $$

\noindent
From (\ref{m*2}), we obtain now the
VOM+SAD approximation
 for the p-value of the test statistic $G_s(X_1,X_2)/g$,

$$P_F\left\{ \frac{G_s(X_1,X_2)}{g} > t \right\} \simeq
P\left\{ \chi^2_g > g \, t \right\} - A + A \, \int_{-\infty}^{\infty}
\frac{1}{\sqrt{t}} \, e^{\frac{(t-1)(x-\mu)^2}{2t\sigma^2}} \; dF(x)$$

\noindent
and from this, we obtain  the approximation
(\ref{m*8}) for the test statistic $G_s(X_1,X_2)$.
\qed
\end{proof}

If $F$ is the location
contaminated normal mixture (LCN),

$$F= (1-\epsilon) \, N(0,1) + \epsilon
\, N(\theta,1)$$

\noindent
the VOM+SAD approximation is

$$
P_F\left\{ G_s(X_1,X_2) > t \right\} \simeq
P\left\{ \chi^2_g > t \right\} + \epsilon \, B \,
\left[ e^{-(1-t/g)\theta^2/2} -1 \right].$$

In Table~\ref{Table:2} appear the {\sl Exact} values
(obtained through simulation of 100.000 samples), the VOM  and the VOM+SAD
approximations when $\epsilon =0'01$, $\theta=1$ and $g=5$.

\begin{table}
\caption{{\sl Exact} and approximate p-values with $g=5$}
\label{Table:2}
\begin{tabular}{p{2.4cm}p{2.4cm}p{2.4cm}p{2.4cm}}
\hline\noalign{\smallskip}
$t$ & ``exact" & VOM appr. & VOM+SAD appr.   \\
\noalign{\smallskip}\hline\noalign{\smallskip}
9  & $0'1125$ & $0'1129$ & $0'1136$\\
11 & $0'0538$ & $0'0539$ & $0'0545$  \\
13 & $0'0251$ & $0'0249$ & $0'0253$  \\
15 & $0'0114$ & $0'0112$ & $0'0115$  \\
17 & $0'0050$ & $0'0049$ & $0'0051$  \\
19 & $0'0022$ & $0'0022$ & $0'0023$  \\
\noalign{\smallskip}\hline
\end{tabular}
\end{table}

\begin{corollary}
To test the null hypothesis $H_0$ that there is no systematic difference between the standardized  Configurations $X_1$ and $X_2$
with $p \times k$ landmarks
(i.e., they belong to the same group) using the
Procrustes statistic  $G(X_1,X_2)$  and assuming that the error difference between Configurations

$$\frac{X_2 - X_1}{\eta}$$

\noindent
follow a scale contamination normal model $\, (1-\epsilon) N(0,1) + \epsilon N(0,\nu) \,$, the VOM+SAD approximation for the tail probability (p-value) is

 $$P\{ G_s(X_1,X_2) > t \}   \approx  P\{ \chi^2_g > t \} + \epsilon \, \frac{g^{3/2}}{\sqrt{\pi}  (t-g)}
 \left[ \frac{\sqrt{g}}{\sqrt{t-\nu^2(t-g)}} -1 \right] $$

\begin{equation}
\cdot \exp \left\{ -\frac{1}{2} \left( t-g-g \cdot \log\frac{t}{g} \right) \right\}
\label{approx1}
\end{equation}

\noindent
where $g=kp-k(k+1)/2-1$. It must be $p> (k+1)/2+1/k$ and obviously an integer.

Then, if $k=2$, it is $g=2p-4$ and $p > 2$.  And  if $k=3$, it is $g=3p-7$ and $p \geq 3$.
\end{corollary}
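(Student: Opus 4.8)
The plan is to read off the Corollary as the specialization of the preceding Proposition, i.e.\ of the master VOM+SAD formula~(\ref{m*8}), to the scale contaminated normal (SCN) model. In that model the uncontaminated component is $N(0,1)$, so I would set $\mu=0$ and $\sigma=1$ in~(\ref{m*8}) and split the mixing measure as $dF(x)=(1-\epsilon)\,d\Phi_{0,1}(x)+\epsilon\,d\Phi_{0,\nu}(x)$. Then the only thing left to do is to evaluate the Gaussian-type integral $\int_{-\infty}^{\infty}\frac{\sqrt g}{\sqrt t}\,e^{(t-g)x^{2}/(2t)}\,dF(x)$, separately against each of the two normal components.

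Against the uncontaminated part, completing the square turns the exponent $\frac{(t-g)x^{2}}{2t}-\frac{x^{2}}{2}$ into $-\frac{g}{2t}x^{2}$, so the integral against $\Phi_{0,1}$ equals $\frac{\sqrt g}{\sqrt t}\int \frac{1}{\sqrt{2\pi}}e^{-g x^{2}/(2t)}\,dx=\frac{\sqrt g}{\sqrt t}\sqrt{t/g}=1$ exactly. This exact normalization is what makes the statement clean: in~(\ref{m*8}) the term $-B$ together with $B(1-\epsilon)\cdot 1$ combines with the contaminated contribution to leave only $\epsilon B$ multiplying a bracket. Against $\Phi_{0,\nu}$ the exponent becomes $-\frac{1}{2}\bigl(\frac{1}{\nu^{2}}-\frac{t-g}{t}\bigr)x^{2}$, whose Gaussian integral is $\nu\sqrt{t/(t-\nu^{2}(t-g))}$; multiplying by the prefactor $\frac{\sqrt g}{\sqrt t\,\nu}$ gives $\sqrt g/\sqrt{t-\nu^{2}(t-g)}$.

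Substituting the two evaluations back into~(\ref{m*8}) and collecting terms yields
$$P_F\{G_s(X_1,X_2)>t\}\simeq P\{\chi^{2}_g>t\}+\epsilon\,B\left[\frac{\sqrt g}{\sqrt{t-\nu^{2}(t-g)}}-1\right],$$
and inserting the explicit constant $B=\frac{g^{3/2}}{\sqrt\pi\,(t-g)}\exp\{-\frac{1}{2}(t-g-g\log(t/g))\}$ from the preceding Proposition produces exactly~(\ref{approx1}). The last two sentences of the statement are then pure arithmetic in $g=kp-k(k+1)/2-1$: for $k=2$ this is $g=2p-4$ and the admissibility bound $p>(k+1)/2+1/k$ becomes $p>2$; for $k=3$ it is $g=3p-7$ and the bound becomes $p>7/3$, i.e.\ $p\ge 3$ since $p$ is an integer.

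The one point that needs care is the convergence of the second Gaussian integral: it requires the quadratic form in the exponent to be negative definite, i.e.\ $\frac{1}{\nu^{2}}>\frac{t-g}{t}$, which is exactly the positivity of the quantity $t-\nu^{2}(t-g)$ appearing under the square root in~(\ref{approx1}); outside this range~(\ref{approx1}) is to be understood as the analytic continuation of the saddlepoint expression, as is customary for such tail approximations. Beyond this, no real obstacle is expected: the whole argument is just the two elementary moment-generating-function integrals of the SCN reduction, carried out in the same spirit as the SCN specialization of the VOM approximation displayed earlier.
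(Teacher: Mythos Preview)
Your proposal is correct and is exactly the derivation implicit in the paper: the Corollary is stated there without a separate proof, as the immediate specialization of~(\ref{m*8}) to the scale contaminated normal model, and you carry out precisely that Gaussian integral computation. Your remark on the positivity of $t-\nu^{2}(t-g)$ for convergence is an extra detail the paper does not make explicit.
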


\section{Applications}

\begin{figure}[b]
\sidecaption
\includegraphics[scale=.22]{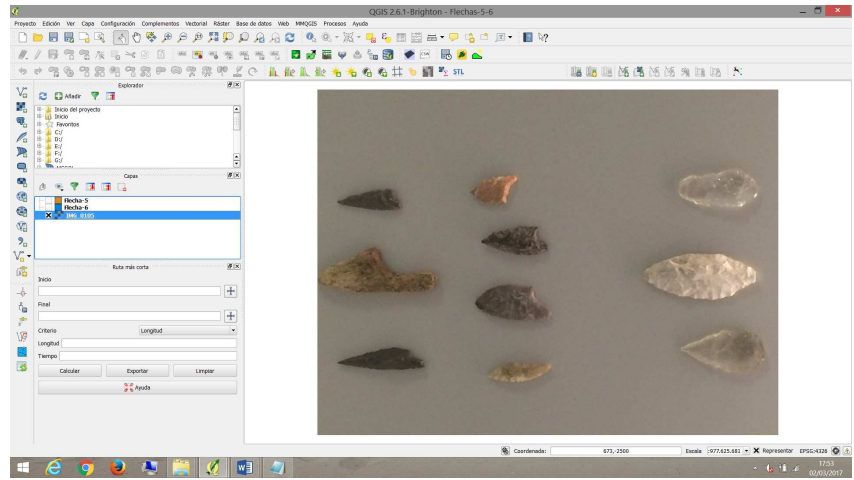}
\caption{Arrows in QGIS}
\label{foto1}
\end{figure}

In this section we are going to consider the following Example in which we make two comparisons using the previous theory.

\begin{example}
\label{exa:2}

We are going to consider two test to check if there are or not significance differences between two arrows of {\sl Notch tips and bay leaves}, of Solutrense period, arrows that were found in caves of Asturias (Spain). We shall make this analysis using a photo of the ``Museo Arqueológico de Asturias'' (Oviedo), included in QGIS as a raster layer, Fig.~\ref{foto1}.

\begin{figure}[b]
\sidecaption
\includegraphics[scale=.22]{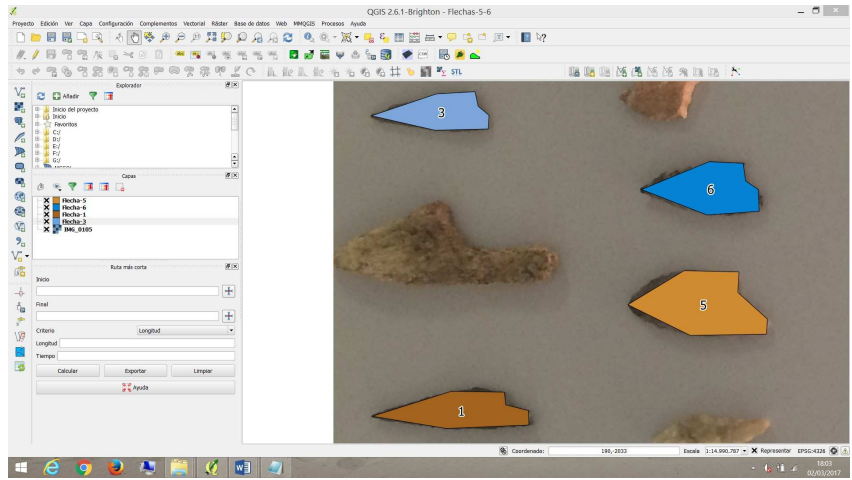}
\caption{Arrows as polygons in QGIS}
\label{foto2}
\end{figure}

In this figure we see large differences among the arrows except in two pairs of arrows: Arrows 1 and 3 and arrows 5 and 6, Fig.~\ref{foto2}.
Hence,  we are going to test the null hypothesis of no significance differences between arrows 1 and 3, and then, with another test, we shall check
the null hypothesis of no significance differences between arrows 5 and 6.

To do this, we first create the polygons in QGIS marking the landmarks with the mouse. We consider $p=7$ landmarks.
Also, with QGIS we export the coordinates of the landmarks,  that are:

\small
\begin{verbatim}

Arrow 3

punta3<-matrix(c(
151.77884,-794.21946,
384.34151,-714.48369,
533.84608,-706.17788,
543.81305,-756.01273,
587.00326,-794.21946,
583.68094,-842.39315,
384.34151,-849.03780),
ncol=2,byrow = T)

Arrow 1

punta1<-matrix(c(
157.59291,-1934.60710,
444.97392,-1841.58204,
640.99102,-1848.22668,
650.95799,-1891.41689,
734.01609,-1917.99548,
735.67725,-1966.16918,
428.36230,-1977.79731),
ncol=2,byrow = T)


Arrow 6

punta6<-matrix(c(
1170.17428,-1072.54821,
1423.18465,-971.34406,
1550.56229,-974.83386,
1557.54188,-1039.39513,
1606.39906,-1074.29311,
1608.14396,-1156.30337,
1410.97036,-1170.26256),
ncol=2,byrow = T)

Arrow 5

punta5<-matrix(c(
1119.57220,-1510.51789,
1327.21520,-1383.14025,
1533.11329,-1386.63005,
1529.62350,-1465.15051,
1639.55214,-1564.60976,
1637.80724,-1618.70164,
1349.89889,-1625.68123),
ncol=2,byrow = T)

\end{verbatim}
\normalsize

\

\noindent
{\bf Comparison between Arrows 1 and 3}

After removing the effect of  {\bf Size} (scale),  {\bf Location} (translation) and {\bf Orientation} (rotation)  to standardize the individuals, we match them at the common centroid obtaining the polygons of Fig.~\ref{foto3}.

\begin{figure}[b]
\sidecaption
\includegraphics[scale=.4]{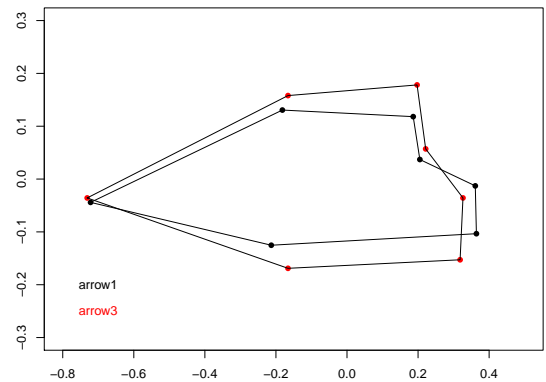}
\caption{Polygons of  arrows 1 and 3}
\label{foto3}
\end{figure}

The minimum {\it Residual Distance} between configurations (arrows), i.e., the value of the
{\it Procrustes statistic} for testing the null that ``No significance differences exist between arrows 1 and 3'' is $0.01567681$:

\small
\begin{verbatim}

> tamapunta1<-sqrt(sum(apply(punta1,2,var)*(7-1)))
> spunta1<-scale(punta1/tamapunta1,scale=F)
> tamapunta3<-sqrt(sum(apply(punta3,2,var)*(7-1)))
> spunta3<-scale(punta3/tamapunta3,scale=F)
> library(shapes)
> (procdist(spunta1,spunta3,type="partial"))^2
[1] 0.01567681

> sd(spunta1-spunta3)
[1] 0.03502222                                                             (1)

\end{verbatim}
\normalsize

Because of {\tt (1)}, choosing $\eta= 0.03502222$, we shall obtain a standard normal distribution for $\, (X_2- X_1)/ \eta \, $ and hence

 $$G(X_1,X_2)  \approx  \eta^2 \, \chi^2_g$$

\noindent
being $g=2p-4 = 10$. Then, the p-value of this classical test  will be

$$
P(Procrus. Stat. > 0.01567681 ) =  P(\chi^2_{10}>  0.01567681/(0.03502222^2))$$

$$ =  P(\chi^2_{10} > 12.78116)=
 1-pchisq(12.78116,10)=0.2361661$$

\noindent
accepting the null hypothesis of no significance differences between Arrows 1 and 3.

Nevertheless,  using the Mahalanobis distance we can conclude that the errors do not follow a multivariate normal distribution,

\small
\begin{verbatim}

> dipuntas2<-mahalanobis(spunta1-spunta3,colMeans(spunta1-spunta3),
+ var(spunta1-spunta3))
> ks.test(dipuntas2,"pchisq",7)

       One-sample Kolmogorov-Smirnov test

data:  dipuntas2
D = 0.76612, p-value = 8.265e-05
alternative hypothesis: two-sided

\end{verbatim}
\normalsize

Hence, to assume a common $\eta$ for all the $c_{ij}$ such that

$$\frac{X_2 - X_1}{\eta} \leadsto N(0,1)$$

\noindent
 is unrealistic. It is better to consider a model

$$0'9 N(0,1) + 0'1 N(0,\nu)$$

\noindent
and to use the VOM+SAD approximation (\ref{approx1}),  programmed into the R function $\;$ {\tt apro3}$(g,\nu,\epsilon,t) \;$
 to compute the p-value. Obtaining from the data  $\eta = 0.04020902$ and $\nu =  0.032261$, we have

\small
$$P(Procrus. Stat. > 0.01567681 ) =  P(Procrus. Stat./(0.04020902^2) >  0.01567681/(0.04020902^2)) $$

$$=  P(Procrus. Stat./(0.04020902^2) > 22.95901)= P( G_s(X_1,X_2) >  22.95901) =
0.006318776$$
\normalsize

\noindent
because

\begin{verbatim}

> apro3(10,0.032261,0.1,22.95901)
[1] 0.006318776

\end{verbatim}

Then, because

$$(VOM+SAD) \mbox{ p-value } = 0.006318776$$

\noindent
we reject, in a more robust way, the null hypothesis of no significance differences between Arrows 1 and 3.

\

\noindent
{\bf Comparison between Arrows 5 and 6}

After removing the effect of  {\bf Size} (scale),  {\bf Location} (translation) and {\bf Orientation} (rotation)  to standardize the individuals we match them at the common centroid having the polygons of Fig.~\ref{foto4}.

\begin{figure}[b]
\sidecaption
\includegraphics[scale=.4]{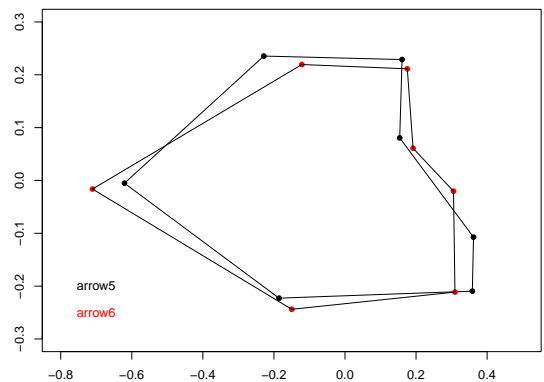}
\caption{Polygons of  arrows 5 and 6}
\label{foto4}
\end{figure}

The minimum {\it Residual Distance} between configurations (arrows), i.e., the value of the
{\it Procrustes statistic} for testing the null that ``No significance differences exist between arrows 5 and 6'' is $0.03711933$,

\small
\begin{verbatim}

> tamapunta5<-sqrt(sum(apply(punta5,2,var)*(7-1)))
> spunta5<-scale(punta5/tamapunta5,scale=F)
> tamapunta6<-sqrt(sum(apply(punta6,2,var)*(7-1)))
> spunta6<-scale(punta6/tamapunta6,scale=F)
> library(shapes)
> (procdist(spunta5,spunta6,type="partial"))^2
[1] 0.03711933

> sd(spunta5-spunta6)                                                      (2)
[1] 0.05343598

\end{verbatim}
\normalsize

Because of {\tt (2)}, if  $\eta= 0.05343598$  we shall obtain a standard normal distribution for $\, (X_2- X_1)/ \eta \, $. Then,

 $$G(X_1,X_2)  \approx  \eta^2 \, \chi^2_g.$$

\noindent
and hence, the p-value of this classical test  will be

$$
P(Procrus. Stat. > 0.03711933 ) =  P(\chi^2_{10}>   0.03711933/(0.05343598^2)) $$

$$=  P(\chi^2_{10}  > 12.99968)=
 1-pchisq(12.99968,10)=0.2236897$$

\noindent
accepting the null hypothesis of no significance differences between Arrows 5 and 6.

Nevertheless, using the Mahalanobis distance we can conclude that the errors do not follow a multivariate normal distribution,

\small
\begin{verbatim}

> dipuntas<-mahalanobis(spunta6-spunta5,colMeans(spunta6-spunta5),
+ var(spunta6-spunta5))
> ks.test(dipuntas,"pchisq",7)

        One-sample Kolmogorov-Smirnov test

data:  dipuntas
D = 0.76677, p-value = 8.093e-05
alternative hypothesis: two-sided

\end{verbatim}
\normalsize

\noindent
Then, to assume a common $\eta$ for all the $c_{ij}$ such that

$$\frac{X_2 - X_1}{\eta} \leadsto N(0,1)$$

\noindent
 is unrealistic. It is better to consider a model

$$0'9 N(0,1) + 0'1 N(0,\nu)$$

\noindent
and to use the VOM+SAD approximation (\ref{approx1}),  programmed as the R function $\;$  {\tt apro3}$(g,\nu,\epsilon,t) \;$
 to compute p-values. From the data we obtain $\eta = 0.06834322$ and $\nu =  0.0389347$ and hence,

\small
$$P(Procrus. Stat. > 0.03711933 ) =  P(Procrus. Stat./(0.06834322^2) >  0.03711933/(0.06834322^2)) $$

$$=  P(G_s(X_1,X_2)   > 7.947111 )=
0.5405565$$
\normalsize

\noindent
because

\begin{verbatim}

> apro3(10,0.0389347,0.1,7.947111)
[1] 0.5405565

\end{verbatim}

Then, because

$$(VOM+SAD) \mbox{ p-value } = 0.5405565$$

\noindent
we finally accept, in a more robust way,  the null hypothesis of no significance differences between Arrows 5 and 6.
\end{example}

\section{Conclusions}

Classical Morphometric Analysis based on Landmarks is reviewed from a descriptive and inferential point of view. Because both are based on sample means and least-squares they are not robust.

We first robustify the descriptive measures proposing robust ones. Then we consider a Contaminated Normal Model distribution instead of a classical Normal one to make robust inferences. Namely, for this mixture model we obtain an von Mises approximation for the p-value of a test for the null hypothesis of
no significance differences between two individuals based on their shapes.

We also obtain a very accurate saddlepoint approximation of this von Mises approximation.  We conclude the paper with some applications using QGIS as  Geographical Information System.

\begin{acknowledgement}
This work is partially supported by Grant HAR2015-68876-P from Ministerio de Econom\'{\i}a y Competitividad (Spain).
\end{acknowledgement}

\bibliographystyle{spbasic}

\end{document}